\documentclass[12pt,draftcls,onecolumn]{IEEEtran}
\usepackage{amsmath}
\usepackage{amssymb}
\usepackage{cite}
\usepackage{color}
\usepackage{epsfig}
\usepackage{epsf}
\usepackage{rotating}
\usepackage{mathrsfs}
\usepackage{epsfig}
\usepackage{graphics}
\usepackage{theorem}

\newtheorem{thm}{Theorem}

\begin{document}

\title{Optimum Diversity-Multiplexing Tradeoff in the Multiple Relays Network \footnote{Financial supports provided by Nortel, and the corresponding matching
 funds by the Federal government: Natural Sciences and Engineering Research Council of Canada (NSERC)
 and Province of Ontario: Ontario Centres of Excellence (OCE) are gratefully acknowledged.}}

\author{\normalsize
Shahab Oveis Gharan, Alireza Bayesteh, and Amir K. Khandani \\
\small Coding \& Signal Transmission Laboratory\\[-5pt]
\small Department of Electrical \& Computer Engineering\\[-5pt]
\small University of Waterloo \\[-5pt]
\small Waterloo, ON, N2L\ 3G1 \\[-5pt]
\small {shahab, alireza, khandani}@cst.uwaterloo.ca\\}

\date{}
\maketitle \thispagestyle{empty}

\begin{abstract}
In this paper, a multiple-relay network in considered, in which
$K$ single-antenna relays assist a single-antenna transmitter to
communicate with a single-antenna receiver in a half-duplex mode.
A new Amplify and Forward (AF) scheme is proposed for this network
and is shown to achieve the optimum diversity-multiplexing
trade-off curve.
\end{abstract}
\section{System Model}
The system , as in \cite{laneman}, \cite{azarian}, and
\cite{yuksel}, consists of $K$ relays assisting the transmitter and
the receiver in the half-duplex mode, i.e. in each time, the relays
can either transmit or receive. The channels between each two node
is assumed to be quasi-static flat Rayleigh-fading, i.e. the channel
gains remain constant during a block of transmission and changes
independently from one block to another. However, we assume that
there is no direct link between the transmitter and the receiver.
This assumption is reasonable when the transmitter and the receiver
are far from each other or when the receiver is supposed to have
connection with just the relay nodes to avoid the complexity of the
network. As in \cite{azarian} and \cite{yang_belfiore}, each node is
assumed to know the state of its backward channel and, moreover, the
receiver is supposed to know the equivalent channel gain from the
transmitter to the receiver. No feedback to the transmitting node is
permitted. All nodes have the same power constraint. Also, we assume
that a capacity achieving gaussian random codebook can be generated
at each node of the network. Hence, the code design problem is not
considered in this paper.

\section{Proposed $K$-Slot Switching N-sub-block Markovian Scheme (SM)}
In the proposed scheme, the entire block of transmission is divided
into $N$ sub-blocks. Each sub-block consists of $K$ slots. Each slot
has $T'$ symbols. Hence, the entire block consists of $T=NKT'$
symbols. In order to transmit a message $w$, the transmitter selects
the corresponding codeword of a gaussian random codebook consisting
of $2^{NKT'r}$ codewords of length $\frac{NK-1}{NK}T$ and transmits
the codeword during the first $NK-1$ slots. In each sub-block, each
relay receives the signal in one of the slots and transmits the
received signal in the next slot. So, each relay is off in
$\frac{K-2}{2}$ of time. More precisely, in the $k$' slot of the
$n$'the sub-block ($1 \leq n \leq N, 1 \leq k \leq K, nk \neq NK$),
the $k$'th relay receives the signals the transmitter is sending,
and amplifies and forwards it to the receiver in the next slot. The
receiver starts receiving the signal from the second slot. After
receiving the last slot ($NK$'th slot) signal, the receiver decodes
the transmitted message by using the signal of $NK-1$ slot received
from $K$ relays. It will be shown in the next section that the
equivalent point-to-point channel from the transmitter to the
receiver would act as a lower-triangular MIMO channel.
\section{Diversity-Multiplexing Tradeoff}
In this section, we show that the proposed method achieves the
optimum achievable diversity-multiplexing curve. First, according to
the cut-set bound theorem \cite{cover_book}, the point-to-point
capacity of the uplink channel (the channel from the transmitter to
the relays) is an upper-bound for the capacity of this system.
Accordingly, the diversity-multiplexing curve of a $1 \times K$ SIMO
system which is a straight line from multiplexing gain $1$ to the
diversity gain $K$ is an upper-bound for the diversity-multiplexing
curve of our system. In this section, we prove that the tradeoff
curve of the proposed method achieves the upper-bound and thus, it
is optimum. First, we prove the statement for the case that there is
no link between the relays. Next, we prove the statement for the
general case.
\subsection{No Interfering Relays}
Assume, the link gain between the $k$'th relay and the transmitter
and the $k$'th relay and the receiver are $h_k$ and $g_k$,
respectively. Furthermore, assume that there is no link between the
relays. Accordingly, at the $k$'th relay we have
\begin{equation}
\mathbf{r}_k=h_k\mathbf{x}+\mathbf{n}_k,
\end{equation}
where $\mathbf{r}_k$ is the received signal vector of the $k$'th
relay, $\mathbf{x}$ is the transmitter signal vector and
$\mathbf{n}_k \sim \mathcal{N}(0, \mathbf{I}_{T'})$ is the noise
vector of the channel. At the receiver side, we have
\begin{equation}
\mathbf{y}=\sum_{k=1}^{K}{g_k\mathbf{t}_k}+\mathbf{z},
\end{equation}
where $\mathbf{t}_k$ is the transmitted signal vector of the $k$'th
relay, $\mathbf{y}$ is the received signal vector at the receiver
side and $\mathbf{z} \sim \mathcal{N}(0,\mathbf{I}_{T'})$ is the
noise vector of the downlink channel. The output power constraint
$\mathbb{E} \left\{\left\|\mathbf{x}\right\|^2\right\}, \mathbb{E}
\left\{\left\|\mathbf{t}_k\right\|^2\right\} \leq T'P$ holds at the
transmitter and relays side. To obtain the DM tradeoff curve of the
proposed scheme, we are looking for the end-to-end probability of
outage from the rate $r\log\left( P \right)$, as $P$ goes to
infinity.
\begin{figure}[hbt]
  \centering
  \includegraphics[scale=.5]{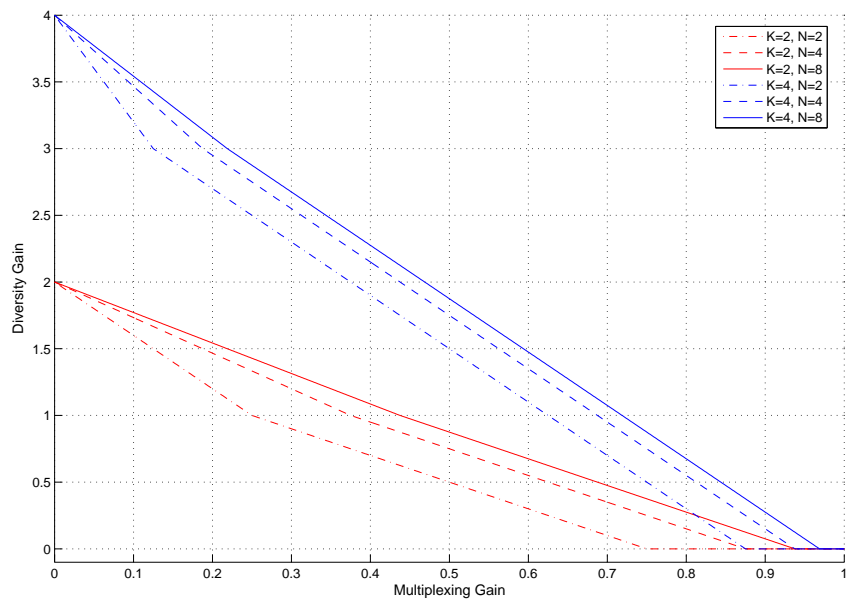}
\caption{DM Tradeoff for the proposed Switching Markovian Scheme and
various values of (K,N), No interfering relays case}
\label{fig:dm_ni}
\end{figure}
\begin{thm}
Assume a half-duplex parallel relay scenario with $K$ no interfering
relays. The proposed SM scheme achieves the diversity gain
\begin{equation}
d_{SM,NI}(r)=\max \left\{0, K\left(1-r\right)- \frac{1}{N},
K\left(1-r\right)- \frac{K r}{N-1} \right\},
\end{equation}
which achieves the optimum achievable DM tradeoff curve
$d_{opt}(r)=K(1-r)$ as $N \to \infty$.
\end{thm}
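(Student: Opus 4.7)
The plan is to carry out the classical Zheng--Tse outage exponent calculation on the end-to-end channel induced by the SM scheme. First I would work out that channel explicitly: because no inter-relay link exists and because in each slot $\ell\in\{2,\ldots,NK\}$ exactly one relay $k(\ell)$ forwards, the received block is
\[
\mathbf{y}_\ell = a_{k(\ell)} g_{k(\ell)} h_{k(\ell)}\, \mathbf{x}_{\ell-1} + a_{k(\ell)} g_{k(\ell)}\, \mathbf{n}_{k(\ell)} + \mathbf{z}_\ell,
\]
with amplification $a_{k(\ell)}^2 = P/(|h_{k(\ell)}|^2 P + 1)$ enforcing the relay power constraint. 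Thus the end-to-end channel decouples into $NK-1$ parallel scalar Gaussian sub-channels. A short book-keeping of the switching pattern shows that relays $1,\ldots,K-1$ each serve $N$ sub-channels while relay $K$ serves only $N-1$ (the suppressed $(N,K)$ slot).

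Second, I would introduce the exponents $|h_k|^2 \doteq P^{-\alpha_k}$, $|g_k|^2 \doteq P^{-\beta_k}$ and verify that, for this AF gain, the per-sub-channel signal-to-noise ratio satisfies $\mathrm{SNR}_k \doteq P^{1-\gamma_k}$ with $\gamma_k := \max(\alpha_k,\beta_k)$, so that $\log(1+\mathrm{SNR}_k) \doteq (1-\gamma_k)^+\log P$. Including the half-duplex pre-log $1/(NK)$, the normalized mutual information becomes
\[
\frac{I}{\log P} \doteq \frac{1}{NK}\Bigl[(N-1)(1-\gamma_K)^+ + N\sum_{k=1}^{K-1}(1-\gamma_k)^+\Bigr].
\]
By Laplace's principle, the outage exponent equals the infimum of $\sum_k (\alpha_k+\beta_k)$ over configurations where this quantity falls below $r$; since objective and constraint depend on $(\alpha_k,\beta_k)$ only through $\gamma_k$, the optimum sets the smaller of $\alpha_k,\beta_k$ to zero, reducing the problem to the linear program $\min \sum_{k=1}^K \gamma_k$ subject to $(N-1)\gamma_K + N\sum_{k=1}^{K-1}\gamma_k \geq NK(1-r)-1$ and $0\leq\gamma_k\leq 1$.

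Third, I would solve this LP by inspection. Because the coefficient $N$ on $\gamma_1,\ldots,\gamma_{K-1}$ exceeds the coefficient $N-1$ on $\gamma_K$, the cheapest way to satisfy the constraint is to push weight onto the first $K-1$ variables first. When $r\geq(N-1)/(NK)$ these alone suffice (with $\gamma_K=0$), giving minimum cost $K(1-r)-1/N$; when $r<(N-1)/(NK)$ they saturate at $\gamma_k=1$ and the remainder $\gamma_K = 1 - NKr/(N-1)$ takes up the slack, giving cost $K(1-r)-Kr/(N-1)$. One checks directly that in each regime the \emph{infeasible} configuration's formal expression is numerically smaller than the feasible one, so taking the $\max$ of the two expressions (and of $0$, to clip at non-negative diversities) picks out the correct value in every regime, yielding the $d_{SM,NI}(r)$ claimed. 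Letting $N\to\infty$ sends both $1/N$ and $Kr/(N-1)$ to zero, matching the cut-set upper bound $K(1-r)$.

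The step I expect to be the main obstacle is the SNR-scaling assertion above: when $\alpha_k\geq 1$ the relay receives below noise and the amplification gain saturates to roughly $\sqrt{P}$, so the naive product-channel computation breaks down. Handling this cleanly requires separating the events $\{\alpha_k<1\}$ and $\{\alpha_k\geq 1\}$ and arguing that the second contributes an outage-dominated probability exponent that is already captured by the $\gamma_k\leq 1$ saturation in the LP; most of the technical effort of the proof will live in this bookkeeping.
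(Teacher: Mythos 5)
Your proposal is correct and follows essentially the same route as the paper's proof: the same decomposition into parallel scalar sub-channels with multiplicities $N$ for relays $1,\ldots,K-1$ and $N-1$ for relay $K$, the same reduction of each sub-channel to a single-Rayleigh outage exponent (your $\gamma_k=\max(\alpha_k,\beta_k)$ computation is exactly what the paper establishes by its chain of $\doteq$ bounds on $\mathbb{P}\{\mathcal{E}_k\}$, including the saturation issue you flag), and the same Laplace-principle reduction to the linear program $\min\sum_k\gamma_k$ subject to $N\sum_{k<K}\gamma_k+(N-1)\gamma_K\geq NK(1-r)-1$, $0\leq\gamma_k\leq 1$, with the same three-regime solution.
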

\begin{proof}
Let us define $\mathbf{x}_{n,k}, \mathbf{n}_{n,k}, \mathbf{r}_{n,k},
\mathbf{t}_{n,k}, \mathbf{z}_{n,k}, \mathbf{y}_{n,k}$ as the
signal/noise transmitted/received by the transmitter/relay/receiver
to the $k$'th relay/receiver in the $k$'th slot of the $n$'th
sub-block. Also, let us define $(k)\equiv k-2~\mod~K + 1$ and
$(n)\equiv n-\lfloor{\frac{(k)}{K}}\rfloor$. Thus, we have
\begin{eqnarray}
\mathbf{y}_{n,k}&=&g_k\mathbf{t}_{n,k}+\mathbf{z}_{n,k} \nonumber \\
&=&g_k{\alpha}_{(k)}\left(h_{(k)}\mathbf{x}_{(n),(k)}+\mathbf{n}_{(n),(k)}\right)+\mathbf{z}_{n,k},
\end{eqnarray}
where ${\alpha}_k=\frac{P}{|h_k|^2P+1}$ is the amplification
coefficient performed in the $k$'th relay. Defining the event
$\mathcal{E}_k$ as the event of outage from the rate $r\log(P)$ in
the $k$'th sub-channel consisting of the transmitter, the $k$'th
relay, and the receiver, we have
\begin{eqnarray}
\mathbb{P}\{\mathcal{E}_k\}&=&\mathbb{P}\left\{\log\left[1+P|g_k|^2|{\alpha}_k|^2|h_k|^2\left(1+|g_k|^2|{\alpha}_k|^2\right)^{-1}\right]
\leq r\log(P)\right\} \nonumber \\
&\doteq & \min \left\{ \mbox{sign}(r),
\mathbb{P}\left\{|g_k|^2|{\alpha}_k|^2|h_k|^2\left(1+|g_k|^2|{\alpha}_k|^2\right)^{-1}
\leq P^{r-1} \right\} \right\} \nonumber \\
& \stackrel{(a)}{\doteq} & \min \left\{ \mbox{sign}(r),
\mathbb{P}\left\{|g_k|^2|{\alpha}_k|^2|h_k|^2 \min
\left\{\frac{1}{2}, \frac{1}{2|g_k|^2|{\alpha}_k|^2} \right\} \leq
P^{r-1} \right\} \right\} \nonumber \\
& \stackrel{(b)}{\doteq} & \min \left\{\mbox{sign}(r), \mathbb{P}
\left\{|h_k|^2 \leq 2P^{r-1} \right\} + \mathbb{P} \left\{
|g_k|^2|{\alpha}_k|^2|h_k|^2 \leq 2
P^{r-1} \right\} \right\} \nonumber \\
& \stackrel{(c)}{\doteq} & \min \left\{ \mbox{sign}(r), P^{-(1-r)} +
\mathbb{P} \left\{ |g_k|^2\min \left\{ \frac{1}{2},
\frac{|h_k|^2P}{2} \right\} \leq 2 P^{r-1} \right\} \right\}
\nonumber \\
& \stackrel{(d)}{\doteq} & \min \left\{ \mbox{sign}(r), P^{-(1-r)} +
\mathbb{P} \left\{ |g_k|^2 \leq 4 P^{r-1} \right\} + \mathbb{P}
\left\{ |g_k|^2|h_k|^2 \leq 4 P^{r-2} \right\} \right\} \nonumber\\
& \stackrel{(e)}{\doteq} & \min \left\{ \mbox{sign}(r), P^{-(1-r)}
\right\}, \label{eq:sbch_ni}
\end{eqnarray}
where $\mbox{sign}(r)$ is the sign function, i.e. $\mbox{sign}(r)=1,
r \geq 0, \mbox{sign}(r)=0, r<0$. Here, (a) follows from the fact
that $\frac{1}{1+|g_k|^2|{\alpha}_k|^2} \doteq \min
\left\{\frac{1}{2}, \frac{1}{2|g_k|^2|{\alpha}_k|^2} \right\}$, (b)
and (d) follow from the union bound inequality, (c) follows from the
fact that $|{\alpha}_k|^2|h_k|^2 \doteq \min \left\{ \frac{1}{2},
\frac{|h_k|^2P}{2} \right\}$ and the pdf distribution of the
rayleigh-fading parameter near zero, and (e) follows from the fact
that the product of two independent rayleigh-fading parameters
behave as a rayleigh-fading parameter near zero. (\ref{eq:sbch_ni})
shows that each sub-channel's tradeoff curve performs as a
single-antenna point-to-point channel.

Defining $R_k(P)$ as the random variable showing the rate of the
$k$'th sub-channel consisting of the transmitter, the $k$'th relay,
and the receiver in terms of $P$, the outage event of the entire
channel from the $r\log(P)$, the event $\mathcal{E}$,  is equal to
\begin{equation}
\mathbb{P}\left\{\mathcal{E}\right\} = \mathbb{P}\left\{N
\sum_{k=1}^{K-1}{R_k(P)}+(N-1)R_K(P) \leq NKr\log(P) \right\}
\end{equation}
Assuming $R_k(P)= r_k\log(P)$, we have
\begin{equation}
\mathbb{P}\left\{\mathcal{E}\right\} \doteq
\mathbb{P}\left\{N\sum_{k=1}^{K-1}{r_k}+(N-1)r_K \leq NKr \right\}
\end{equation}
$\mathbb{P} \left\{R_k(P) \leq r_k \log(P)\right\}$ is known by
(\ref{eq:sbch_ni}). Defining the region $\mathcal{R}$ as
\begin{equation}
\mathcal{R} = \left\{ \left(r_1, r_2, \cdots, r_K\right) | 0 \leq
r_k \leq 1,  N\sum_{k=1}^{K-1}{r_k}+(N-1)r_K \leq NKr \right\}
\label{eq:R_df_ni}
\end{equation}
it is easy to check that all the vectors $\left(r_1, r_2, \cdots,
r_K\right)$ that result in the outage event almost surely lie in
$\mathcal{R}$. In fact, according to (\ref{eq:sbch_ni}), for all $k$
we know $r_k \geq 0$. Also, for $r_k
> 1$, $\mathbb{P} \left\{R_k(P) \geq r_k \log(P)\right\} \leq
e^{-P^{r-1}}$ which is exponential in terms of $P$. Hence, $r_k > 1$
can be disregarded for the outage region. As a result, $\mathbb{P}
\left\{ \mathcal{E} \right\} \doteq \mathbb{P} \left\{ \mathbf{r}
\in \mathcal{R} \right\}$.

On the other hand, by (\ref{eq:sbch_ni}) and the fact that $r_k$'s
are independent, we have
\begin{equation}
\mathbb{P} \left\{ r_1 \leq r_1^0, r_2 \leq r_2^0, \cdots, r_K \leq
r_K^0 \right\} \doteq P^{-\left(K-\sum_{k=1}^{K}{r_k^0}\right)}
\label{eq:cdf_ni}
\end{equation}
Now, we show that $\mathbb{P} \left\{ \mathcal{E} \right\} \doteq
P^{-\min_{\mathbf{r} \in
\mathcal{R}}{K-\mathbf{1}\cdot\mathbf{r}}}$. First of all, by taking
derivative of (\ref{eq:cdf_ni}) with respect to
$r_1,r_2,\cdots,r_K$, it is easy to see that the probability density
function of $\mathbf{r}$ behaves the same as the probability
function in (\ref{eq:cdf_ni}), i.e. $f_r(\mathbf{r}) \doteq
P^{-\left( {K-\mathbf{1}\cdot\mathbf{r}} \right) }$. Hence, the
outage probability is equal to
\begin{eqnarray}
\mathbb{P} \left\{ \mathcal{E} \right\} & \doteq & \int_{\mathbf{r}
\in \mathcal{R}}{f_r(\mathbf{\mathbf{r}})d\mathbf{r}} \nonumber \\
& \dot{\leq} & vol(\mathcal R)P^{-\min_{\mathbf{r} \in
\mathcal{R}}{K-\mathbf{1}\cdot\mathbf{r}}} \nonumber \\
& \stackrel{(a)}{\doteq} & P^{-\min_{\mathbf{r} \in
\mathcal{R}}{K-\mathbf{1}\cdot\mathbf{r}}} \label{eq:R_ub_ni}
\end{eqnarray}
Here, (a) follows from the fact that $\mathcal{R}$ is a fixed
bounded region whose volume is independent of $P$. On the other
hand, by continuity of $P^{-\left( {K-\mathbf{1}\cdot\mathbf{r}}
\right) }$ over $\mathbf{r}$, we have $\mathbb{P} \left\{
\mathcal{E} \right\} \dot{\geq} P^{-\min_{\mathbf{r} \in
\mathcal{R}}{K-\mathbf{1}\cdot\mathbf{r}}}$ which combining with
(\ref{eq:R_ub_ni}), results into $\mathbb{P} \left\{ \mathcal{E}
\right\} \doteq P^{-\min_{\mathbf{r} \in
\mathcal{R}}{K-\mathbf{1}\cdot\mathbf{r}}}$. Defining
$l(\mathbf{r})=K-\mathbf{1}\cdot\mathbf{r}$, we have to solve the
following linear programming optimization problem $\min_{\mathbf{r}
\in \mathcal{R}}{l(\mathbf{r})}$. Notice that the region
$\mathcal{R}$ is defined by a set of linear inequality constraints.
To solve the problem, we have
\begin{eqnarray}
l(\mathbf{r}) & \stackrel{(a)}{\geq} & \max \left\{0, K - \frac{NKr + r_K}{N}, K - \frac{NKr-\sum_{k=1}^{K-1}r_k}{N-1} \right\}\nonumber \\
& \stackrel{(b)}{\geq} & \max \left\{0, K(1-r)- \frac{1}{N}, K(1-r)-
\frac{K r}{N-1} \right\}.
\end{eqnarray}
Here, (a) follows from the inequality constraint in
(\ref{eq:R_df_ni}) governing $\mathcal{R}$, and (b) follows from the
fact that $r_K \leq 1$ and $\forall k<K: r_k \geq 0$. Now, we
partition the range $0 \leq r \leq 1$ into three intervals. First,
in the case that $r>1-\frac{1}{NK}$, the feasible point
$\mathbf{r}=\mathbf{1}$ achieves the lower bound $0$. Second, in the
case that $r<\frac{1}{K}-\frac{1}{NK}$, the feasible point
$\mathbf{r}=\left(0,0,\cdots,0,\frac{NKr}{N-1}\right)$, achieves the
lower bound $K(1-r)- \frac{K r}{N-1}$. Finally, in the case that
$\frac{1}{K}-\frac{1}{NK} \leq r \leq 1-\frac{1}{NK}$, The lower
bound $K(1-r)- \frac{1}{N}$ is achievable by the feasible point
$\mathbf{r}, \forall k<K:~ r_k=\frac{NKr-N+1}{N(K-1)}, r_K=1$.
Hence, we have $\min_{\mathbf{r} \in \mathcal{R}} l(\mathbf{r}) =
\max \left\{ 0, K(1-r)- \frac{1}{N}, K(1-r)- \frac{K r}{N-1}
\right\}$. This completes the proof.
\end{proof}
\textit{Remark -} It is worth noting that as long as the graph $G(V,
E)$ whose vertices are the relay nodes and edges are the non
interfering relay node pairs includes a hamiltonian cycle
\footnote{By hamiltonian cycle, we mean a simple cycle $v_1v_2\cdots
v_K v_1$ that goes exactly one time through each vertex of the
graph.}, the result of this subsection remains valid.
\subsection{General Case}
In the general case, an interference term due to the neighboring
relay adds at the receiver antenna of each relay.
\begin{equation}
\mathbf{r}_k = h_k \mathbf{x} + i_{(k)} \mathbf{t}_{(k)} +
\mathbf{n}_k,
\end{equation}
where $i_{(k)}$ is the interference link gain between the $k$'th and
$(k)$'th relays. Hence, the amplification coefficient is bounded as
$\alpha _k \leq \frac{P}{P \left( \left| h_k \right|^2 + \left|
i_{(k)} \right|^2 \right) + 1}$. Here, we observe that in the case
that $\alpha_k >1$, the noise $n_k$ at the receiving side of the
$k$'th relay can be boosted at the receiving side of the next relay.
Hence, we bound the amplification coefficient as $\alpha_k = \min
\left\{ 1, \frac{P}{P \left( \left| h_k \right|^2 + \left| i_{(k)}
\right|^2 \right) + 1} \right\}$. In this way, it is guaranteed that
the noise of relays are not boosted up through the system. This is
at the expense working with the output power less than $P$. On the
other hand, we know that almost surely \footnote{By almost surely,
we mean its probability is greater than $1-P^{-\delta}$, for all
values of $\delta$.} $\left| h_k \right|^2 , \left| i_{(k)}
\right|^2 \dot{\leq} 1$. Hence, almost surely we have $\alpha _k
\doteq 1$. Another change we make in this part is that we assume
that the entire time of transmission consists of $NK+1$ slots, and
the transmitter sends the data during the first $NK$ slots while the
relays send in the last $NK$ slots (from the second slot up to the
$NK+1$'th slot). Hence, we have $T=(NK+1)T'$. This assumption makes
our analysis easier and the lower bound on the diversity curve
tighter. Now, we prove the main theorem of this section.
\begin{thm}
Consider a half-duplex multiple relays scenario with $K$ interfering
relays whose gains are independent rayleigh fading variables. The
proposed SM scheme achieves the diversity gain
\begin{equation}
d_{SM,I}(r) \geq \max \left\{ 0, K \left( 1 - r \right) -
\frac{r}{N} \right\},
\end{equation}
which achieves the optimum achievable DM tradeoff curve
$d_{opt}(r)=K(1-r)$ as $N \to \infty$.
\end{thm}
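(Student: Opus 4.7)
The plan is to follow the template of the Theorem 1 proof while carefully accounting for the interference cascade between adjacent relays.

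The first step is to write the end-to-end input--output relation across the $NK+1$ slots. In slot $(n,k)$ the $k$-th relay observes $\mathbf{r}_{n,k}=h_k\mathbf{x}_{n,k}+i_{(k)}\mathbf{t}_{n,k-1}+\mathbf{n}_{n,k}$, where $\mathbf{t}_{n,k-1}$ is what the previous relay transmits in that same slot, namely $\alpha_{(k)}$ times what it received one slot earlier. Unrolling this recursion, the destination sample in slot $(n,k+1)$ becomes $g_k\alpha_kh_k\mathbf{x}_{n,k}$ plus a linear combination of earlier transmitter symbols with coefficients of the form $g_k\alpha_ki_{(k)}\alpha_{(k)}h_{(k)}\cdots$, plus cascaded relay noises. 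Stacking all $NK$ received vectors yields an equivalent MIMO relation $\mathbf{y}=\mathbf{H}\mathbf{x}+\tilde{\mathbf{z}}$ in which $\mathbf{H}$ is block lower triangular with diagonal blocks $g_k\alpha_kh_k\mathbf{I}_{T'}$.

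The second step is to exploit this lower-triangular structure. Using the author's prior observations that $\alpha_k\doteq 1$ and $|i_{(k)}|^2\dot{\leq}1$ almost surely, every entry of $\mathbf{H}$ and every coefficient in the cascaded noise is $\dot{\leq}1$, so the covariance $\mathbb{E}\{\tilde{\mathbf{z}}\tilde{\mathbf{z}}^{*}\}$ has operator norm $\doteq 1$. A successive-interference-cancellation argument (decoding $\mathbf{x}_{n,k}$ in chronological order, each time cancelling the known contribution of earlier symbols) then lower-bounds the achievable mutual information by a sum of per-slot rates $R_k(P)$, each behaving up to $\doteq$ exactly like the single-relay sub-channel of (\ref{eq:sbch_ni}), i.e.\ $\mathbb{P}\{R_k(P)\leq r_k\log P\}\doteq P^{-(1-r_k)}$.

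The third step is the outage-region calculation, which is cleaner than in Theorem 1 because every relay is now used the same number of times $N$. The no-outage condition becomes $N\sum_{k=1}^{K}R_k(P)\geq(NK+1)r\log P$, so the outage region is $\mathcal{R}=\{\mathbf{r}:0\leq r_k\leq 1,\ N\sum_{k=1}^{K}r_k\leq(NK+1)r\}$. Replaying the density/volume argument from Theorem 1 gives $\mathbb{P}\{\mathcal{E}\}\doteq P^{-\min_{\mathbf{r}\in\mathcal{R}}(K-\mathbf{1}\cdot\mathbf{r})}$, and a one-line linear program (minimized at the symmetric point $r_k=(NK+1)r/(NK)$ when that quantity is $\leq 1$, else at $\mathbf{r}=\mathbf{1}$) yields $\max\{0,K(1-r)-r/N\}$.

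The main obstacle will be step two: justifying that, despite the fact that noises in different slots are correlated through the amplify-and-forward cascade, the SIC decomposition into per-slot rates remains exponentially tight. This boils down to a careful but routine bookkeeping of noise propagation, using the almost-sure bounds $\alpha_k\doteq 1$ and $|i_{(k)}|^2\dot{\leq}1$ to show the effective per-slot noise variance is $\doteq 1$, so that no additional high-SNR rate penalty beyond the $r/N$ slack already present in the statement is ever incurred.
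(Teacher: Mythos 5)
Your outline is essentially correct, but it takes a genuinely different route from the paper. The paper never decomposes the channel into per-slot rates: it keeps the full $NK \times NK$ equivalent channel $\mathbf{y} = \mathbf{G}\mathbf{\Omega}\mathbf{F}\left(\mathbf{H}\mathbf{x}+\mathbf{n}\right)+\mathbf{z}$ and bounds the outage probability through $\log\left|\mathbf{I}+P\mathbf{H}_T\mathbf{H}_T^H\mathbf{P}_n^{-1}\right| \geq \log\left(\left|\mathbf{H}_T\right|^2\left|\mathbf{P}_n\right|^{-1}\right)$. The point of the lower-triangular structure there is that $\left|\mathbf{F}\right|=1$, so the interference coefficients $p_{n,k,k_1}$ disappear entirely from the signal determinant, which collapses to $\prod_k \left|g_k\alpha_k h_k\right|^{2N}$; the interference survives only in the noise covariance $\mathbf{P}_n$, which is controlled by the row-sum (diagonal-dominance) bound $\mathbf{F}\mathbf{F}^H \preccurlyeq N^2K^2\mathbf{I}$. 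This turns the outage event directly into a half-space condition on $N\sum_k\left(\mu(k)+\nu(k)\right)$ in the exponent variables, and the rest is the Theorem 1 volume computation. Your SIC decomposition reaches the same exponent by a different door: the chain-rule bound $I(\mathbf{x};\mathbf{y}) \geq \sum_{n,k} I\left(\mathbf{x}_{n,k};\mathbf{y}_{n,k+1}\,\middle|\,\mathbf{x}_{\mathrm{past}}\right)$ is legitimate because the newest symbol in each received slot carries the diagonal coefficient $g_k\alpha_k h_k$ and the relay noises are independent of the inputs, and your noise bookkeeping (effective variance $\dot{\leq}\ 1+NK\left|g_k\right|^2 \doteq 1$ on the almost-sure event) is exactly the estimate the paper packages as $\mathbf{P}_n \preccurlyeq (N^2K^2+1)\mathbf{I}$. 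Two cautions: (i) SIC only gives the inclusion $\mathcal{E} \subseteq \left\{N\sum_k R_k \leq (NK+1)r\log P\right\}$, hence $\dot{\leq}$ rather than the $\doteq$ you assert for $\mathbb{P}\left\{\mathcal{E}\right\}$ --- harmless, since the theorem only claims $d_{SM,I}(r) \geq \cdots$, but it should be stated as a one-sided bound; (ii) you must justify that the per-relay rates are asymptotically independent across $k$, which holds because, after conditioning on $\alpha_k \doteq 1$, each $R_k$ depends only on the independent pair $(g_k,h_k)$, with $\left|g_k\right|^2\left|h_k\right|^2$ behaving like a single Rayleigh variable near zero. What your route buys is transparency: the $r/N$ loss is visibly the rate dilution from transmitting over $NK$ of the $NK+1$ slots, in exact parallel with Theorem 1. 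What the paper's route buys is that it never has to argue about correlation or independence of per-slot quantities, since everything reduces to a single determinant identity.
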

\begin{proof}
First, we show that the entire channel matrix acts as a lower
triangular matrix. At the receiver side, we have
\begin{eqnarray}
\mathbf{y}_{n,k} & = & g_k \mathbf{t}_{n, k} + \mathbf{z}_{n, k}
\nonumber \\
& = & g_k \alpha _{(k)} \left( \sum_{0 < n_1, k_1, n_1 (K + 1) + k_1
< n (K + 1) + k}{p_{n-n_1, k, k_1}\left( h_{k_1}\mathbf{x}_{n_1,
k_1} + \mathbf{n}_{n_1, k_1}\right) }  \right) + \mathbf{z}_{n, k}
\end{eqnarray}
Here, $p_{n, k, k_1}$ has the following recursive formula $p_{0, k,
k}=1, p_{n, k, k_1}=i_{(k)}\alpha_{(k)}p_{(n), (k), k_1}$. Defining
the square $NK \times NK$ matrices as $\mathbf{G}= \mathbf{I}_N
\otimes \textit{diag}\left\{ g_1, g_2, \cdots ,g_K \right\}$,
$\mathbf{H}= \mathbf{I}_N \otimes \textit{diag}\left\{ h_1, h_2,
\cdots ,h_K \right\}$, $\mathbf{\Omega} = \mathbf{I}_N \otimes
\textit{diag}\left\{ \alpha _1, \alpha _2, \cdots ,\alpha _K
\right\}$, and
\begin{equation}
\mathbf{F}= \left(
\begin{array}{ccccc}
1 & 0 & 0 & 0 & \ldots \\
p_{0,2,1} & 1 & 0 & 0 & \ldots \\
p_{0,3,1} & p_{0, 3, 2} & 1 & 0 & \ldots \\
\vdots & \vdots & \vdots & \vdots & \ddots \\
p_{N-1, K, 1} & p_{N-1, K, 2} & \ldots & p_{0, K, K-1} & 1
\end{array},
 \right)
\end{equation}
where $\otimes$ is the Kronecker product\cite{matrix_book} of
matrices and $\mathbf{I}_N$ is the $N \times N$ identity matrix, and
the $NK \times 1$ vectors $\mathbf{x}\left(s\right)=[x_{1,1}(s),
x_{1, 2}(s), \cdots ,x_{N, K}(s)]^T$,
$\mathbf{n}\left(s\right)=\left[n_{1,1}\left(s\right), n_{1, 2}(s),
\cdots ,n_{N, K}(s)\right]^T$,
$\mathbf{z}\left(s\right)=[z_{1,2}(s), z_{1, 3}(s), \cdots ,z_{N+1,
1}(s)]^T$, and $\mathbf{y}\left(s\right)=[y_{1,2}(s), y_{1, 3}(s),
\cdots ,y_{N+1, 1}(s)]^T$, we have
\begin{equation}
\mathbf{y}\left(s\right) = \mathbf{G} \mathbf{\Omega} \mathbf{F}
\left( \mathbf{H} \mathbf{x}\left(s\right) +
\mathbf{n}\left(s\right) \right) + \mathbf{z}\left(s\right).
\end{equation}
Here, we observe that the matrix of the entire channel acts as a
lower triangular matrix of a $NK \times NK$ MIMO channel whose noise
is colored. The probability of outage of such a channel for the
multiplexing gain $r$ is defined as
\begin{equation}
\mathbb{P} \left\{ \mathcal{E} \right\}=\mathbb{P} \left\{ \log
 \left|\mathbf{I}_{KN} + P \mathbf{H}_{T}\mathbf{H}_{T}^{H}\mathbf{P}_n^{-1} \right| \leq
(NK+1)r \log\left( P \right) \right\},
\end{equation}
where $\mathbf{P}_N=\mathbf{I}_{NK}+\mathbf{G} \mathbf{\Omega}
\mathbf{F} \mathbf{F}^H \mathbf{\Omega}^H \mathbf{G}^H$, and
$\mathbf{H}_T=\mathbf{G} \mathbf{\Omega} \mathbf{F} \mathbf{H}$.
Assume $|h(k)|^2=P^{-\mu(k)}$, $|g(k)|^2=P^{-\nu(k)}$,
$|i(k)|^2=P^{-\omega(k)}$, and $\mathcal{R}$ as the region in
$\mathbb{R}^{3K}$ that defines the outage event $\mathcal{E}$ in
terms of the vector $[\mathbf \mu, \mathbf \nu, \mathbf \omega]$,
where $\mathbf{\mu}=\left[ \mu(1) \mu(2) \cdots \mu(K) \right]^T,
\mathbf{\nu}=\left[ \nu(1) \nu(2) \cdots \nu(K)
\right]^T,\mathbf{\omega}=\left[ \omega(1) \omega(2) \cdots
\omega(K) \right]^T$. The probability distribution function (and
also the inverse of cumulative distribution function) decays
exponentially as $P^{-P^{-\delta}}$ for positive values of $\delta$.
Hence, the outage region $\mathcal R$ is almost surely equal to
$\mathcal{R}_{+}=\mathcal{R} \bigcap \mathbb{R}_{+}^{3K}$. Now, we
have
\begin{eqnarray}
\mathbb{P} \left\{ \mathcal{E} \right\} & \stackrel{(a)}{\leq} &
\mathbb{P} \left\{ \left| \mathbf{H}_T \right|^2 \left| \mathbf{P}_n
\right|^{-1} \leq
P^{-NK \left( 1-r \right) +r}\right\} \nonumber \\
& \stackrel{(b)}{\leq} & \mathbb{P} \left\{ -N
\sum_{k=1}^{K}{\mu(k)+\nu(k)- \min \left\{ 0, \mu(k), \omega((k))
\right\}} + \right. \nonumber \\
&& \frac{NK\log(3) - \log \left| \mathbf{P}_{N} \right|}{\log \left(
P \right)} \leq -NK(1-r)+r
\Bigg\} \nonumber \\
& \stackrel{(c)}{\dot{\leq}} & \mathbb{P} \left\{ NK \frac{\log(3) -
\log (N^2K^2+1)}{\log (P)} + NK\left( 1-r \right) - r \leq N
\sum_{k=1}^{K}{\mu(k) + \nu(k)},\right. \nonumber \\
&&  \mu(k),\nu(k),\omega(k) \geq 0 \Bigg\}. \label{eq:R_hat_wi}
\end{eqnarray}
Here, (a) follows from the fact that for a positive semidefinite
matrix $\mathbf A$ we have $\left| \mathbf{I} + \mathbf{A} \right|
\geq \left| \mathbf{A} \right|$, (b) follows from the fact
that
\begin{equation}
\alpha (k) = \min \left\{ 1, \frac{P}{P^{1-\mu (k)} + P^{1-\omega
((k))} + 1} \right\} \geq \frac{1}{3} \min \left\{ 1,  P, P^{\mu
(k)}, P^{\omega ((k))} \right\}  \nonumber
\end{equation}
and assuming $P$ is large enough such that $P \geq 1$, and (c)
follows from the fact that $\alpha (k) \leq 1$ and accordingly,
$p_{n,k,k_1} \leq 1$, and knowing that the sum of the entries of
each row in $\mathbf{F}\mathbf{F}^H$ is less than $N^2K^2$, we
have\footnote{This can be verified by the fact that every symmetric
real matrix $\mathbf{A}$ which has the property that for every $i$,
$a_{i,i} \geq \sum_{i \neq j}|a_{i,j}|$ is positive semidefinite.}
$\mathbf{F}\mathbf{F}^H \preccurlyeq N^2K^2 \mathbf{I}_{NK}$, and
$\mathbb P \left\{ \mathcal{R} \right\} \doteq \mathbb P \left\{
\mathcal{R}_{+} \right\}$, and conditioned on $\mathcal{R}_{+}$, we
have $\min \left\{ 0, \mu(k), \omega ((k)) \right\} = 0$ and $\nu
(k) \geq 0$ and consecutively $\mathbf{P}_N \preccurlyeq (N^2K^2 +
1) \mathbf{I}_{KN}$.

On the other hand, we know for vectors $\mathbf{\mu}^0,
\mathbf{\nu}^0, \mathbf{\omega}^0 \geq \mathbf 0$, we have
$\mathbb{P} \left\{\mathbf{\mu} \geq \mathbf{\mu}^0, \mathbf{\nu}
\geq \mathbf{\nu}^0, \mathbf{\omega} \geq \mathbf{\omega}^0 \right\}
\doteq P^{-\mathbf{1} \cdot \left( \mathbf{\mu}^0 + \mathbf{\nu}^0 +
\mathbf{\omega}^0 \right)}$. Similarly to the proof of Theorem 1, by
taking derivative with respect to $\mathbf \mu, \mathbf \nu$ we have
$f_{\mathbf \mu, \mathbf \nu}(\mathbf \mu, \mathbf \nu) \doteq P^{-
\mathbf 1 \cdot \left( \mathbf \mu + \mathbf \nu \right)}$ .Defining
the lower bound $l_0$ as $l_0 = \frac{\log(3) - \log
(N^2K^2+1)}{\log (P)} + \left( 1-r \right) - \frac{r}{NK} $, the new
region $\hat{\mathcal{R}}$ as $\hat{\mathcal{R}} = \left\{
\mathbf{\mu},\mathbf{\nu} \geq \mathbf{0}, \frac{1}{K} \mathbf{1}
\cdot \left( \mathbf{\mu} + \mathbf{\nu}\right) \geq l_0 \right\}$,
the cube $\mathcal I$ as $\mathcal I = \left[0, Kl_0 \right]^{2K}$,
and for $1 \leq i \leq 2K$, $\mathcal{I}_i^c=[0, \infty )^{i-1}
\times [Kl_0, \infty ) \times [0, \infty )^{2K-i}$,
 we observe
\begin{eqnarray}
\mathbb{P} \left\{ \mathcal E \right\} & \stackrel{(a)}{\dot{\leq}}
& \mathbb{P} \{ \hat{\mathcal R} \} \nonumber \\
& \leq & \int_{\mathcal{\hat{R}} \bigcap \mathcal{I}}{f_{\mathbf
\mu, \mathbf \nu}\left(\mathbf \mu, \mathbf \nu \right) d\mathbf \mu
d \mathbf \nu} + \sum_{i=1}^{2K}{\mathbb{P} \left\{ [\mathbf \mu,
\mathbf \nu] \in  \mathcal{\hat{R}} \cap \mathcal{I}_i^c \right\}}
\nonumber
\\
&\dot{\leq} & vol (\mathcal{\hat{R}} \cap \mathcal{I})
P^{-\min_{\left[ \mathbf{\mu}^0, \mathbf{\nu}^0 \right] \in
\mathcal{\hat{R}} \bigcap \mathcal{I}} \mathbf{1} \cdot \left(
\mathbf{\mu}^0 + \mathbf{\nu}^0 \right) } + 2K P^{-Kl_0}
\nonumber \\
& \stackrel{(b)}{\doteq} & P^{-Kl_0} \nonumber \\
& \doteq & P^{-\left[K \left( 1 - r \right) - \frac{r}{N} \right]}.
\label{eq:t2_r_wi}
\end{eqnarray}
Here, (a) follows from (\ref{eq:R_hat_wi}) and (b) follows from the
fact that $\mathcal{\hat{R}} \bigcap \mathcal{I}$ is a bounded
region whose volume is independent of $P$. (\ref{eq:t2_r_wi})
completes the proof.
\end{proof}
\begin{figure}[hbt]
  \centering
  \includegraphics[scale=.5]{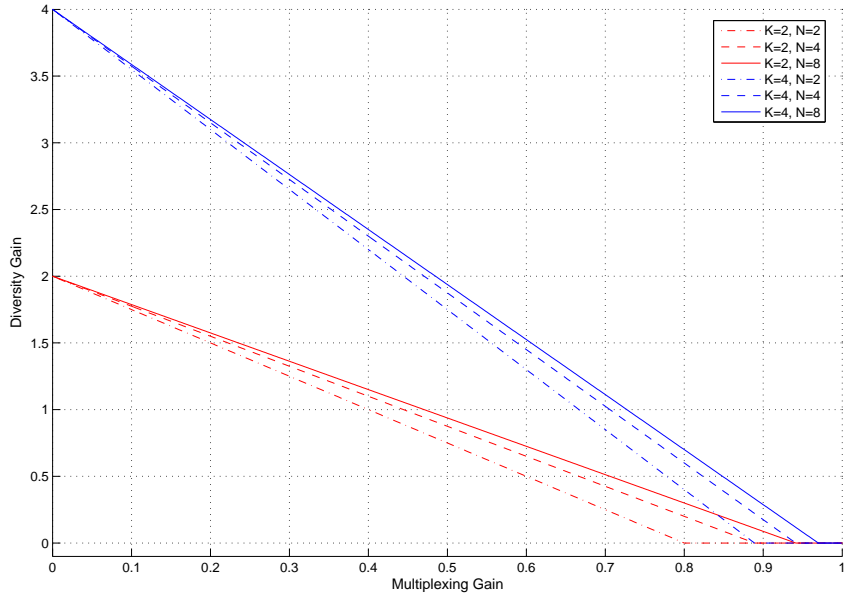}
\caption{DM Tradeoff for the proposed Switching Markovian Scheme and
various values of (K,N), Interfering relays case} \label{fig:dm_wi}
\end{figure}
\textit{Remark -} The statement in the above theorem holds for the
general case in which any arbitrary set of relay pairs are
non-interfering. Hence, the proposed scheme achieves the upper-bound
of the tradeoff curve in the asymptotic case of $N \to \infty$ for
any graph topology on the interfering relay pairs.

Figure (\ref{fig:dm_wi}) shows the D-M tradeoff curve of the scheme
for the case of interfering relays and varying number of $K$ and
$N$.

\end{document}